\documentclass[runningheads]{llncs}
\raggedbottom
\usepackage{hyperref}
\usepackage{amsmath}
\usepackage{amssymb}
\usepackage{enumerate}
\usepackage[all]{xy}
\allowdisplaybreaks[4]

\newcommand{\M}{\mathfrak{M}}
\newcommand{\N}{\mathfrak{N}}

\newcommand{\A}{\mathcal{A}}
\newcommand{\K}{\mathcal{K}}
\renewcommand{\P}{\mathcal{P}}

\newcommand{\U}{\mathbb{U}}
\newcommand{\V}{\mathbb{V}}
\newcommand{\W}{\mathbb{W}}

\newcommand{\warning}{\textsf{Warning}}
\newcommand{\Sfive}{\texttt{S5}}

\renewcommand{\phi}{\varphi}

\begin{document}

\title{\texorpdfstring{Static Knowledge vs. Dynamic Argumentation\\\sf\large A Dual Theory Based on Kripke Semantics}{Static Knowledge vs. Dynamic Argumentation}}
\titlerunning{Static Knowledge vs. Dynamic Argumentation}
\author{\tt Xinyu Wang\and Momoka Fujieda}
\authorrunning{X. Wang\and M. Fujieda}
\institute{\tt School of Information Science\\Japan Advanced Institute of Science and Technology}

\maketitle

\begin{abstract}
This paper establishes a dual theory about knowledge and argumentation. Our idea is rooted at both epistemic logic and argumentation theory, and we aim to merge these two fields, not just in a superficial way but to thoroughly disclose the intrinsic relevance between knowledge and argumentation. Specifically, we define epistemic Kripke models and argument Kripke models as a dual pair, and then work out a two-way generation method between these two types of Kripke models. Such generation is rigorously justified by a duality theorem on modal formulae's invariance. We also provide realistic examples to demonstrate our generation, through which our framework's practical utility gets strongly advocated. We finally propose a philosophical thesis that knowledge is essentially dynamic, and we draw certain connection to Maxwell's demon as well as the well-known proverb ``knowledge is power''.

\keywords{epistemic logic \and argumentation theory \and Kripke model \and ultrafilter \and Maxwell's demon}
\end{abstract}

\section{Introduction}\label{sec.int}

As we learn from Newtonian mechanics, motion is relative rather than absolute \cite{Kittel73}, and quantum mechanics further tells us about wave-particle duality \cite{Wichmann71}. Nature proves to be highly symmetric and unified. In epistemic logic, agents' instantaneous knowledge is \textit{statically} represented via current possible worlds \cite{vanDitmarsch08}, viz. implicitly through the equivalence class of all the indistinguishable worlds \textit{at this moment}. Whereas in argumentation theory, the first-class citizens are chronological arguments that are \textit{dynamically} articulated by different agents so as to attack each other \cite{Dung95}, explicitly composing a sequential history of a tit-for-tat debate \textit{during a period of time}.

Following some pioneering attempts to describe argumentation theory with modal logic \cite{Grossi10,Grossi11}, a number of recent studies continue to explore assorted means of combining these two dimensions, namely agents' knowledge and argumentation, into one compound Kripke-like model \cite{Schwarzentruber12,Caminada15,Proietti21}. Nevertheless as far as we can see, those cursory treatments mostly scratch the surface, for under their settings generally speaking, knowledge and argumentation \textit{ipso facto} reside within two distinct strata of the model which are evaluated in completely separate ways, and so they are totally free to vary independently without any innate relevance. Such kind of ad hoc solution fails to fully conform to our intuition. Actually just from everyday common sense, we can assuredly comprehend that on the one hand, someone (supposedly honest) argues for something, most simply because he knows/believes what he argues for is true; on the other hand even more noticeably, we always infer others' knowledge/belief based on their arguments and actions, i.e. what they have said and done --- essentially speaking, it is exactly the \textit{only} way through which we may learn about others' knowledge. After all, in order to find out what others really know, it would help very little even if we were allowed to biologically dissect others' brains and examine their nerve cells! Therefore of all things, knowledge and argumentation are fundamentally intertwined.

In this paper, we propose a dual theory which simultaneously deals with both static knowledge and dynamic argumentation, from an intrinsic perspective that exquisitely unveils the close mutual interference between these two concepts. Specifically speaking, we shall define two different types of Kripke models, one for epistemic whilst another for argument, and then we shall demonstrate the two-way transformation between the two types. This is to say literally, given some epistemic-type Kripke model, we are able to compute out its corresponding argument-type Kripke model, and vice versa, the epistemic-type Kripke model can be generated from the argument-type Kripke model as well. In order to justify our framework, not only shall we elaborate on the philosophical intuition behind our definitions, but we shall also rigidly prove a duality theorem so that readers will get positively convinced of our approach's plausibility. Our novel disposal of the relationship between knowledge and argumentation incisively sheds light on both philosophical depth and mathematical complicacy, furthermore, many particular aspects of our method's superiority will be vividly exhibited as we leisurely walk through pragmatic examples of our logic's application. At last we shall briefly discuss a few possible expansions and some other relevant topics.

The rest of this paper is organized as follows. Section~\ref{sec.pre} defines preliminaries of our framework, in particular two types of Kripke models, namely the epistemic one and the argument one. Section~\ref{sec.k2a} illustrates how we can generate the corresponding argument Kripke model given the current epistemic Kripke model. Section~\ref{sec.a2k} subsequently establishes the reverse generation, i.e. from the current argument Kripke model to its corresponding epistemic Kripke model. Since this latter direction of generation may look a bit more complex, in the next Section~\ref{sec.exa} we at length analyze a truly practical example of a virtual debate between two candidates for the mayor, which significantly affirms our logic's affluent utility in the real world. Section~\ref{sec.dua} then is devoted to technical issues and ends up with a duality theorem on the two-way generation between two types of Kripke models, hence formally justifying our dual definition. Section~\ref{sec.dis} further discusses several relevant expansions and applications of our framework, especially invoking the famous thought experiment of Maxwell's demon so as to aver that knowledge is a dynamic notion in essence. Section~\ref{sec.con} finally concludes this paper and suggests possible directions of future work.

\section{Preliminaries}\label{sec.pre}

Fix a nonempty set of propositions $P$ and a nonempty set of agents $I$.
\begin{definition}[Language]\label{def.lan}
There are two types of well formed formulae: an epistemic formula $\phi$ and an argument formula $\xi$. Respectively, we define them by the following Backus–Naur forms, where $p\in P$, $i\in I$:
\begin{align*}
    \phi::=p\mid\neg\phi\mid(\phi\land\phi)\mid\K_i\phi & & \xi::=i\mid\neg\xi\mid(\xi\land\xi)\mid\A_p\xi
\end{align*}
\end{definition}
\begin{remark}
Readers may immediately notice the obvious duality between the two Backus–Naur forms in Definition~\ref{def.lan}. Indeed, as this paper aims at proposing a dual theory based on Kripke semantics, it is hardly a surprise that pairs of epistemic notions and argument notions will appear in parallel throughout the whole paper. Moreover, such symmetry also functions as a signpost to help readers smoothly navigate this paper.
\end{remark}
The intuitive readings for epistemic formulae are just as usual, e.g., epistemic formula $\K_i\phi$ means ``at the current world, agent $i$ knows $\phi$''. For argument formulae, intuitively, argument formula $i$ means ``the current argument is available to agent $i$'', and argument formula $\A_p\xi$ means ``all the arguments that attack the current argument with respect to proposition $p$ satisfy $\xi$''. Why we stipulate these intuitive readings for argument formulae will gradually become clear to readers as we continue unfolding our entire framework, but for now, if readers cannot help feeling somehow uneasy about our argument formulae or wondering whether they are really useful at all, then readers are strongly advised to take a quick look at the later Subsection~\ref{sub.syn}, where we concisely yet convincingly provide further elaboration on our dual syntax, for example what argument formula $\A_p\neg i$ intuitively means, and also mention some possible syntactic expansions.

Anyway, the syntactic respect being as such, on the other side, the semantic respect also turns out to be quite standard Kripke models:
\begin{definition}[Epistemic Kripke Model]\label{def.ekm}
An epistemic Kripke model is a tuple $\M=(S,E,F)$ where:
\begin{itemize}
    \item $S$ is a nonempty set of possible worlds.
    \item $E:I\to\P(S\times S)$ is a function that assigns an equivalence epistemic relation $E(i)\subseteq S\times S$ to each agent $i\in I$.
    \item $F:P\to\P(S)$ is a valuation function.
\end{itemize}
\end{definition}
\begin{definition}[Argument Kripke Model]\label{def.akm}
An argument Kripke model is a tuple $\N=(\W,A,G)$ where:
\begin{itemize}
    \item $\W$ is a nonempty set of possible arguments.
    \item $A:P\to\P(\W\times\W)$ is a function that assigns an irreflexive attack relation $A(p)\subseteq\W\times\W$ to each proposition $p\in P$.
    \item $G:I\to\P(\W)$ is an availability function.
\end{itemize}
\end{definition}
\begin{remark}
In accordance to common practice of epistemic logic, Definition~\ref{def.ekm} requires $E(i)$ to be an equivalence relation, i.e., epistemic Kripke models are \Sfive. But what restrictions should argument Kripke models conform to? We speculate that at least, any argument must not attack itself, and therefore Definition~\ref{def.akm} requires relation $A(p)$ to be irreflexive.
\end{remark}
An argument Kripke model in Definition~\ref{def.akm} is intuitively interpreted as the following:
\begin{itemize}
    \item For any $U,V\in\W$ and $p\in P$, $(U,V)\in A(p)$ means ``argument $V$ attacks argument $U$ with respect to proposition $p$''. \warning: $(U,V)\in A(p)$ means ``$V$ attacks $U$'', \textit{not} ``$U$ attacks $V$''.
    \item For any $U\in\W$ and $i\in I$, $U\in G(i)$ means ``argument $U$ is available for agent $i$ to utter out''. As we see, $\W$ includes all \textit{epistemically} possible arguments, such as ``it must be over 7am now, since it is so bright outside'' and ``it must be before 7am now, since there are few people on the street''. Nonetheless, for any rational agent $i$ at certain fixed world, apparently both arguments cannot be \textit{practically} accessible to him at the same time, otherwise it would imply absurdity. Therefore, the subset $G(i)\subseteq\W$ denotes all the arguments currently available to agent $i$.
\end{itemize}
\begin{remark}
Here we would like to emphasize once again, as our above ``\warning'' asserts, $(U,V)\in A(p)$ intuitively means ``$V$ attacks $U$'' rather than ``$U$ attacks $V$''. So why do we adopt such a ``reverse'' definition? Certainly, the whole argument model as a graph structure would just carry the very same information even if we had defined it in the opposite way, however as a Kripke model, the direction of the binary attack relation $A(p)$ matters really huge \cite{vanBenthem82}. Therefore to tell the truth, it should just be our chosen direction rather than the reverse that can comply well with other parts of this paper. All the relevant mathematical details will formally manifest such a fact along with our paper's progressing.
\end{remark}
\begin{definition}[Semantics]\label{def.sem}
For epistemic formulae, given an epistemic Kripke model $\M=(S,E,F)$ and a possible world $s\in S$:
\begin{align*}
    \M,s\vDash p\iff & s\in F(p)\\
    \M,s\vDash\neg\phi\iff & \text{\rm{not }}\M,s\vDash\phi\\
    \M,s\vDash\phi\land\psi\iff & \M,s\vDash\phi\text{\rm{ and }}\M,s\vDash\psi\\
    \M,s\vDash\K_i\phi\iff & \text{\rm{for all }}t\in S\text{\rm{ such that }}(s,t)\in E(i),\M,t\vDash\phi
\end{align*}
For argument formulae, given an argument Kripke model $\N=(\W,A,G)$ and a possible argument $U\in\W$:
\begin{align*}
    \N,U\vDash i\iff & U\in G(i)\\
    \N,U\vDash\neg\xi\iff & \text{\rm{not }}\N,U\vDash\xi\\
    \N,U\vDash\xi\land\zeta\iff & \N,U\vDash\xi\text{\rm{ and }}\N,U\vDash\zeta\\
    \N,U\vDash\A_p\xi\iff & \text{\rm{for all }}V\in\W\text{\rm{ such that }}(U,V)\in A(p),\N,V\vDash\xi
\end{align*}
\end{definition}
Semantics in Definition~\ref{def.sem} is just routine, based on which readers can easily verify that our intuitive readings for both epistemic and argument formulae are indeed appropriate.

Next, in the following Sections \ref{sec.k2a} and \ref{sec.a2k}, we are going to rigorously define a two-way generation between these dual types of Kripke models, which constitutes the backbone of this paper.

\section{From Knowledge to Argumentation}\label{sec.k2a}

Our first task is to generate an argument Kripke model $\N_g$, given an epistemic Kripke model $\M=(S,E,F)$ and the current world $s\in S$. Thus more precisely, $\N_g$ depends both on $\M$ and on $s$ and may be more explicitly denoted as $\N_g(\M,s)$ or $\N_g(s)$, but we shall anyway denote simply as $\N_g$ as long as $\M$ and $s$ are clear from the context without any possible confusion.

Now that we are provided with all the epistemically possible worlds, what are possible arguments then? In fact ontologically, an argument advocates an opinion about the world, asserting what the world is like or what the world is not like, such as ``the earth is getting warmer and warmer'' or ``unicorn does not exist''. Therefore technically speaking, an argument is essentially a nonempty subset of possible worlds, claiming that our current world lies within this subset. Of course, an empty set stands for a self-contradictory argument, and we can safely ignore such a case under our innocent assumption that all agents are rational.

If arguments are subsets of possible worlds, then how about attack relations between arguments? In fact with respect to any proposition $p$, an argument (i.e., a nonempty subset of possible worlds) must hold one out of three distinct types of attitudes:
\begin{enumerate}
    \item\label{ite.true}The argument asserts $p$, i.e., all the possible worlds in the subset satisfy $p$.
    \item\label{ite.false}The argument asserts $\neg p$, i.e., all the possible worlds in the subset falsify $p$.
    \item\label{ite.both}The argument does not tell whether $p$, i.e., some possible worlds in the subset satisfy $p$ while the others falsify $p$.
\end{enumerate}
Hence, several different versions of attack relations might look plausible:
\begin{enumerate}[(i)]
    \item Weak attack relation: only case~\ref{ite.true} and case~\ref{ite.false} attack each other.
    \item Medium attack relation: case~\ref{ite.true} and case~\ref{ite.false} attack each other, and both of them attack case~\ref{ite.both}.
    \item Strong attack relation: any two different cases attack each other.
\end{enumerate}
While all the above versions of attack relations may be quite reasonable by one way or another, in this paper, we choose to stick to the medium attack relation. Our intuition is such: different arguments possess different strength. Some arguments are stronger while others are weaker, and generally speaking, stronger arguments attack weaker arguments. The universal set $S$ itself is a trivial argument, as it just says that every possible world is possible and thus essentially argues nothing at all, therefore, it must be one of the weakest arguments and is supposed to be attacked by many other stronger arguments. Assume that with respect to some proposition $p$, the trivial argument $S$ is case~\ref{ite.both}, (which indeed seems most likely to be,) then we expect it to be attacked by arguments of case~\ref{ite.true} and case~\ref{ite.false}, but not vice versa. The above explanation accounts for our preference for the medium attack relation. Besides, only the medium attack relation is not necessarily symmetric, a crucial feature which undoubtedly makes the medium attack relation both more philosophically profound and more technically interesting.

By the way, we also point out in the later Subsection~\ref{sub.sem} that Dung's classic argument framework \cite{Dung95} is probably closer to the weak attack relation. Nevertheless for our project here, the weak attack relation looks pitifully too weak: case~\ref{ite.both} is just radically ruled out of the game so that much information about the original epistemic Kripke model will get lost.

Finally, we also have to figure out the last part in the generated argument Kripke model $\N_g=(\W,A,G)$, namely about the availability function $G$. Here we simply assume every agent is rational and honest, so that his available arguments should not exceed his own knowledge, that is to say, an argument (i.e., a nonempty subset of possible worlds $U\subseteq S$) is available to an agent $i\in I$ if and only if $U$ is a superset of $i$'s current epistemic equivalence class. Putting all the above discussion together, we obtain the following Definition~\ref{def.k2a}:
\begin{definition}[Generated Argument Kripke Model]\label{def.k2a}
Given an epistemic Kripke model $\M=(S,E,F)$ with a possible world $s\in S$, the generated argument Kripke model is $\N_g=(\W,A,G)$, where:
\begin{itemize}
    \item $\W=\P(S)\setminus\emptyset$.
    \item For any $p\in P$ and any $U,V\in\W$, $(U,V)\in A(p)$ iff $\exists t\in U\forall r\in V$, $(t\in F(p)\land r\notin F(p))\lor(t\notin F(p)\land r\in F(p))$.
    \item For any $i\in I$, $G(i)=\{U\in\W\mid\forall t\in S,(s,t)\in E(i)\to t\in U\}$.
\end{itemize}
\end{definition}
It is easy to see that $\W\neq\emptyset$ and that $A(p)$ is irreflexive, thus $\N_g$ in Definition~\ref{def.k2a} is indeed a well-defined argument Kripke model. In order to vividly demonstrate this generating process, we provide a simple example below:
\begin{example}\label{exa.sim}
Suppose there are two different propositions $p,q\in P$ and two different agents $a,b\in I$.

The original epistemic Kripke model is given as the following $\M$ with three possible worlds $S=\{s_1,s_2,s_3\}$, where any proposition not shown at a possible world is falsified there by default. The present world is $s_2$, therefore both $p$ and $q$ actually hold, but agent $a$ only knows the former fact and agent $b$ the latter.
$$\xymatrix{
\M & s_1:p\ar@(ur,ul)@{.}|{E(a),E(b)}\ar@{.}[r]|{E(a)} & s_2:p,q\ar@(ur,ul)@{.}|{E(a),E(b)} & s_3:q\ar@(ur,ul)@{.}|{E(a),E(b)}\ar@{.}[l]|{E(b)}\\
& & \{s_1,s_2,s_3\}:a,b\ar[dl]|{A(p)}\ar[dr]|{A(q)}\\
\N_g\text{ (in part)} & \{s_1,s_2\}:a\ar@/_1pc/[rr]|{A(q)} & \{s_1,s_3\}\ar[l]|{A(p)}\ar[r]|{A(q)} & \{s_2,s_3\}:b\ar@/_1pc/[ll]|{A(p)}\\
& \{s_1\}\ar@{<->}[r]|{A(q)}\ar@{<->}@/_1pc/[rr]|{A(p),A(q)} & \{s_2\} & \{s_3\}\ar@{<->}[l]|{A(p)}
}$$
We can then compute out the generated argument Kripke model as the above $\N_g$, where there are seven possible arguments in total as the seven nonempty subsets of $S$, however in fact, just a part of $\N_g$ is shown for the sake of neatness. Specifically speaking, any attack relations between the upper four arguments and the lower three are omitted, but anyway readers can easily complete the whole $\N_g$ if needed. An argument is available to agent $a$ or $b$ only if it gets explicitly marked out in the above partial diagram of $\N_g$, and thus we can perceive that an (honest and rational) agent is only able to argue for some superset of his current epistemic equivalence class of possible worlds.

From the generated argument Kripke model $\N_g$, we can also easily see that the possible argument $\{s_1,s_2\}$ is available to agent $a$ and has the power to attack many other arguments with respect to proposition $p$, exactly because agent $a$ knows $p$ in the original epistemic Kripke model $\M$ at the current possible world $s_2$, and vise versa for the possible argument $\{s_2,s_3\}$, agent $b$ and proposition $q$. \textit{Ergo scientia est potentia}. In sum, the anticipated interaction between knowledge and argumentation indeed coincides with our intuition pretty well.
\end{example}

\section{From Argumentation to Knowledge}\label{sec.a2k}

Our second task is to generate an epistemic Kripke model $\M_g$. In parallel with Definition~\ref{def.k2a} in Section~\ref{sec.k2a}, here our start point should be an argument Kripke model $\N=(\W,A,G)$ together with the current argument $U\in\W$. But what does the ``current'' argument mean? We understand it as the last argument in a realistic argumentation process consisting of a chronological series of successive arguments, which is also ordinarily regarded as ``the currently winning argument'', exactly like Dung points out in \cite{Dung95}:
\begin{quote}
The way humans argue is based on a very simple principle which is summarized succinctly by an old saying: ``\textit{The one who has the last word laughs best}''.
\end{quote}
So suppose we are now provided with an argument Kripke model and the current argument, what on earth should be all the epistemically possible worlds? It could look like a merely impossible mission at the very first glance. Nevertheless, considering how we manage to derive argumentation from knowledge in Section~\ref{sec.k2a}, conceptually speaking, if our knowledge determines our argumentation, then our argumentation must also reflect our knowledge the other way round. Also from everyday experience, it is actually quite common and natural for us to infer others' epistemic states based on their utterances, namely, arguments. Therefore, just as nonempty subsets of possible worlds can represent possible arguments, let us in turn try to seek what kinds of subsets of possible arguments can represent possible worlds.

As it is assumed that all agents are rational, roughly speaking, all the possible arguments available to certain agent should form a \textit{consistent} subset, however not necessarily \textit{complete}. Instead, very similar to the canonical model method that gets extensively utilized in completeness proof of modal logic \cite{Goldblatt92}, the core philosophical idea of our treatment here is to regard \textit{maximal consistent} subsets of possible arguments as possible worlds. This subset of possible arguments are exactly all those arguments that actually hold true at the corresponding possible world, hence it must be consistent, and also once it is maximal, this subset contains so many possible arguments that they completely cover all aspects of a world's characteristics, thus uniquely determining \textit{the} one possible world.

For now, in order to formalize the above intuitive idea, one subtle technical issue remains to be solved with care: we still lack a rigid definition of so-called ``consistent'' subsets of possible arguments. After all, recall that now we are directly provided with an argument Kripke model, which is never guaranteed to have been actually generated from some epistemic Kripke model; in fact based on Definition~\ref{def.akm}, we know nothing but attack relations are irreflexive in argument Kripke models. Hence, we have to define a very general notion of ``consistency'' through very general mathematical tools.

Nevertheless, some sort of hint drawn from our established methodology of the generated argument Kripke model may still prove to be quite enlightening. For a figurative clue let us refer back to Example~\ref{exa.sim}, where the generated argument Kripke model $\N_g$ very similarly resembles the power set lattice of $S$ \cite{Artin91}, so apparent an evidence that drives us to naturally think of the bijective correspondence between any element in a set and its principal ultrafilter, which is exactly a subset of the power set. Such a broad orientation via ultrafilters seems quite promising, however as we can confirm, an argument Kripke model does not necessarily form a Boolean algebra, and thus we have to invoke a more generalized mathematical definition, say ultrafilters over a partial order (or preorder, almost equivalently). But then what is the partial order?

Thankfully, recall our previous informal discussion on stronger arguments and weaker arguments in Section~\ref{sec.k2a}, there exactly exists a quite handy and natural notion of partial order among arguments, namely by order of argumentation strength. And then immediately, our core philosophical idea bears intimate similarity to the ultrafilter extension construction in modal logic, so that proper filters correspond to ``consistent theories'' while ultrafilters correspond to ``maximal consistent theories'', for readers unfamiliar with this topic, cf. \cite{Blackburn01} as a rudimentary introduction. Formal definitions go as the following:
\begin{definition}[Preorder $\leq$]\label{def.pre}
Given an argument Kripke model $\N=(\W,A,G)$, we define a preorder $\leq\subseteq\W\times\W$ such that for any $U,V\in\W$, $U\leq V$ iff $\forall p\in P\forall W\in\W$, $(W,V)\in A(p)\to(W,U)\in A(p)$.
\end{definition}
Verifying that $\leq$ is reflexive and transitive is obvious. As for the intuitive reading, $U\leq V$ means ``argument $U$ is no weaker than argument $V$ in attack power'', just as Definition~\ref{def.pre} shows, if and only if any pair of proposition and argument attacked by $V$ also gets attacked by $U$.
\begin{remark}
$U\leq V$ intuitively means ``$U$ is stronger than $V$'' (or at least equally strong), \textit{not} ``$V$ is stronger than $U$'', and therefore, here it is inadvisable to read the notation ``$\leq$'' as ``less than or equal to''. This kind of definition in many ways conforms to the convention in set theory \cite{Jech03}: the ``smaller'', the stronger.
\end{remark}
Given any fixed partial order (or more generally, fixed preorder) $\leq$ over any fixed nonempty set $\W$, the usual definitions of filters and ultrafilters \cite{Kunen80} are presented as below:
\begin{definition}[Filter]
For any $\U\subseteq\W$, $\U$ is a filter iff:
\begin{itemize}
    \item $\U\neq\emptyset$.
    \item $\forall U\in\U$, $\forall W\in\W$, $U\leq W\to W\in\U$.
    \item $\forall U,V\in\U$, $\exists W\in\U$, $W\leq U\land W\leq V$.
\end{itemize}
A filter $\U$ is proper iff $\U\neq\W$, i.e., $\U\subset\W$.
\end{definition}
\begin{definition}[Ultrafilter]
For any $\U\subseteq\W$, $\U$ is an ultrafilter iff:
\begin{itemize}
    \item $\U$ is a proper filter.
    \item $\neg\exists\V\subseteq\W(\U\subset\V\land\V$ is a filter).
\end{itemize}
\end{definition}
Here we have to pay a little more attention, however. Unlike set-theoretic ultrafilters, which always exist for any nonempty set $S$ as subsets of $\P(S)$, for a portion of partial orders or preorders there exist no ultrafilters over them. To give a pictorial though informal depiction, those partial orders without ultrafilters never ``diverge permanently'' towards totally independent directions, thus so to speak, they are either just linear orders or in a sense much like linear orders. Fortunately, such kind of partial orders matter little to argumentation theory --- no need to argue if there is no divergence after all! Therefore, as the following Definition~\ref{def.tri}, in this paper we can safely concentrate on nontrivial argument Kripke models with ultrafilters.
\begin{definition}[Triviality]\label{def.tri}
An argument Kripke model $\N$ is trivial iff there is no ultrafilter over the preorder $\leq$ in Definition~\ref{def.pre}. $\N$ is nontrivial iff it is not trivial.
\end{definition}
Through viewing ultrafilters as possible worlds, we at last approach the following Definition~\ref{def.a2k}:
\begin{definition}[Generated Epistemic Kripke Model]\label{def.a2k}
Given a nontrivial argument Kripke model $\N=(\W,A,G)$ with a possible argument $W\in\W$, the generated epistemic Kripke model is $\M_g=(S,E,F)$, where:
\begin{itemize}
    \item $S=\{\U\subseteq\W\mid\U$ is an ultrafilter over the preorder $\leq\}$.
    \item For any $i\in I$ and any $\U,\V\in S$, $(\U,\V)\in E(i)$ iff $\U\cap G(i)=\V\cap G(i)$.
    \item For any $p\in P$, $F(p)=\{\U\in S\mid\exists U\in\U\forall V\in\U,V\leq U\to(V,W)\notin A(p)\}$.
\end{itemize}
\end{definition}
It is easy to see that $S\neq\emptyset$ and that $E(i)$ is an equivalence relation, thus $\M_g$ in Definition~\ref{def.a2k} is indeed a well-defined epistemic Kripke model. Readers may now wonder, however, that we have yet provided no justification for $E(i)$ or $F(p)$ in Definition~\ref{def.a2k}. As a matter of fact, since possible worlds are constituted of ultrafilters, which are somewhat mathematically factitious after all, it is inevitably tough to give further intuitive explanation solely based on our folk experience. Rather, we shall adopt a little indirect but strictly formal approach, through investigating into a beautiful duality theorem on invariance over the language in Definition~\ref{def.lan}, so as to demonstrate that our generations in Definitions \ref{def.k2a} and \ref{def.a2k} are indeed perfectly reasonable. Nevertheless, before we dive into technical subtleties related to this duality theorem in Section~\ref{sec.dua}, we had better first take a look at one vivid practical example presented in the following Section~\ref{sec.exa}, both for acquainting readers with our rather abstract mathematical definitions above and for illustrating the significant practicality of such two-way generation as well as our framework as a whole.

\section{A Practical Example: Running for the Mayor}\label{sec.exa}

\begin{example}\label{exa.may}
Alice and Bob, two of the mayor candidates, are publicly debating over their intended urban policies.

`Aging is the biggest challenge our city is now facing,' says Alice. `There are more and more old people, yet young population is decreasing. If I am elected as the mayor, I will wield a series of revolutionary measures that attract young people to move and dwell into our city.'\hfill$\cdots\cdots(A_1)$

`No, we don't need any more young people who will end up competing for only a limited number of job positions,' says Bob. `In my opinion, it is unemployment that constitutes our biggest problem, since at present around 1,000 young people are badly in need of a job. Therefore if I am elected as the mayor, I promise to create 1,000 new job opportunities during 5 years.'\hfill$\cdots\cdots(B)$

`I cannot agree with you,' says Alice. `As far as I know, our city's working-age population is rapidly decreasing at a current speed of around minus 400 per year, so what will be the point of those 1,000 more job opportunities if there are merely not enough young people who can take them?'\hfill$\cdots\cdots(A_2)$
\end{example}
In actuality, the above debate depicted in Example~\ref{exa.may} may well continue going on, but to utter a \textit{real-time} commentary on the debate, at the present moment Alice's last argument attacks Bob's, and thus based on our folk principle that ``the one who has the last word laughs best'', we can conclude that Alice is \textit{tentatively} winning the debate. Therefore, we can formalize this debate as the following:
\begin{itemize}
    \item Two agents $a,b$ denote Alice and Bob, respectively.
    \item $A_1$, $B$, $A_2$ denote the three consecutive arguments by Alice and Bob, respectively, as we have indexed beside the text above in Example~\ref{exa.may}. Our current possible argument is the last argument, namely $A_2$.
    \item Proposition $p$ denotes that ``there exists considerable unemployment.'' Argument $B$ attacks argument $A_1$ with respect to proposition $p$.
    \item Proposition $q$ denotes that ``considering aging, unemployment is not a serious problem''. Argument $A_2$ attacks argument $B$ with respect to proposition $q$.
    \item At present, arguments $A_1$ and $A_2$ are of course available to Alice herself. Nonetheless, argument $B$ is \textit{not} available to Bob \textit{now} (although it used to be), because it is losing to the current argument $A_2$ which is attacking it and hence Bob has to give it up.
\end{itemize}
Then, our current argument Kripke model $\N$ should look like the following:
$$\xymatrix@=50pt{
\N & A_1:a,\neg b\ar[r]|{A(p)} & B:\neg a,\neg b\ar[r]|{A(q)} & A_2:a,\neg b
}$$
Based on model $\N$, the preorder $\leq$ in Definition~\ref{def.pre} can be represented as the following Hasse diagram, namely $A_2\leq A_1$ and $B\leq A_1$:
$$\xymatrix{
& A_1 &\\
A_2\ar@{-}[ur] & & B\ar@{-}[ul]
}$$
And so obviously there exist two ultrafilters, namely $\{A_1,A_2\}$ and $\{A_1,B\}$.

Before finalizing our construction of the generated epistemic Kripke model $\M_g$, let us pause for a while and genuinely think about what we are actually doing now and why we are doing it. Surely, we desire a familiar, tidy and expressive epistemic Kripke model that can aid us to reason about the current scenario, as epistemic logic has been thoroughly studied over the decades and has proved to be a quite accurate and productive depiction of agents' knowledge, especially helpful when dealing with intricate situations such as the muddy children puzzle and the Byzantine generals problem \cite{Fagin95}. Nonetheless, there remains one fatal concern. Indeed, the common method to obtain epistemic Kripke models in reality still relies largely on human intuition and so is rather obscure, a deficiency which in fact greatly impedes such kind of logic approach to knowledge presentation from being universally put into practical application \cite{Perrotin19}. For instance, just under our current setting of the above Example~\ref{exa.may}, we have already formalized the argument Kripke model $\N$ with ease, simply as a literal transcription of the debate \textit{per se}, however, the task of generating the epistemic Kripke model for Alice and Bob is far from obvious or even natural. After all, in the realistically physical world, it is solely Alice and Bob's \textit{extrinsic} debate that is plainly accessible to other audience, whereas their \textit{intrinsic} knowledge could be indirectly inferred but never directly observed.

The above analysis accounts for exactly why we are about to compute the generated epistemic Kripke model $\M_g$ from our original argument Kripke model $\N$, and also illuminates one of the major motivations as well as applications of our dual framework presented in this paper. Hence as defined in the previous Section~\ref{sec.a2k}, the generated epistemic Kripke model $\M_g$ from the argument Kripke model $\N$ together with the current possible argument $A_2$ is calculated to be the following:
$$\xymatrix@=50pt{
\M_g & \{A_1,A_2\}:p,q\ar@(ur,ul)@{.}|{E(a),E(b)}\ar@{.}[r]|{E(b)} & \{A_1,B\}:p,\neg q\ar@(ur,ul)@{.}|{E(a),E(b)}
}$$
It can be noted that $\M_g$ contains two possible worlds, namely the two ultrafilters $\{A_1,A_2\}$ and $\{A_1,B\}$. Thus which one should be our current possible world? The current possible world should of course include the current possible argument $A_2$, and so it must be the ultrafilter $\{A_1,A_2\}$, where both $p$ and $q$ really hold. Now as $\M_g$ also shows, Alice actually knows the fact $q$ but Bob does not, since he also considers another world (i.e. ultrafilter) $\{A_1,B\}$ possible, where $\neg q$ holds instead. Hence in a word, our resulting epistemic Kripke model $\M_g$, albeit automatically generated from the argument Kripke model $\N$, indeed conforms pretty well with what a human being would intuitively deduce from the argumentation process between Alice and Bob in Example~\ref{exa.may}.

Admittedly, in this specific simple example, $\M_g$ looks so obvious just by our intuition, and some of the readers might still be wondering why we essentially need such an indirect process to generate $\M_g$ rather than try to directly construct the current epistemic Kripke model. But here the main point is, no matter how simple or how complex $\M_g$ may actually be, to say the very least, any straightforward clues about what the epistemic Kripke model is like can be scarcely found either in the real physical world, or in the narration of some story happened in real world. For instance considering the muddy children puzzle, an ordinary solution would commonly start with a cubic epistemic Kripke model which is almost taken for granted, although this puzzle's literal text --- namely some English sentences describing the background settings of the puzzle followed by the conversation between the parent and the children --- actually shows no direct hint for any cubic structure at all. As a matter of fact, according to David Lewis' modal realism \cite{Lewis01}, all possible worlds in the epistemic Kripke model are equally real, which could suggest that since we are at our current world, we should not be fully capable of observing other possible worlds as directly as the current world, thus not for the whole epistemic Kripke model, either. Therefore in argumentation theory as well as other fields of application, quite substantially, the \textit{intangible} epistemic Kripke model is always generated from other concrete \textit{observables}, such as the argument Kripke model. Even though in relatively easy cases like Example~\ref{exa.may} and the muddy children puzzle, the epistemic Kripke model seems fairly apparent, nonetheless, human's intuition may fail to function well when confronted with much complicated circumstances, and moreover importantly, such vague concept of human intuition possesses little reference value for AI systems to be counted as an effective algorithm. Hence, our approach by the dual theory helps to fill up the above gap through automatically generating the appropriate epistemic Kripke model in a systematical way.

As a conclusion to our philosophical analysis over Example~\ref{exa.may} in this section, we have carefully demonstrated how to generate the corresponding epistemic Kripke model $\M_g$ given the current argument Kripke model $\N$. In the next Section~\ref{sec.dua}, we shall then turn our attention to some technical details and in the end prove a duality theorem so as to formally justify our such methodology of two-way generation.

\section{Knowledge vs. Argumentation: The Duality}\label{sec.dua}

Fix an epistemic Kripke model $\M=(S,E,F)$ with a possible world $s\in S$, and suppose its corresponding generated argument Kripke model is $\N_g=(\W,A,G)$, where we have $\{s\}\in\W$. Assume $\N_g$ is nontrivial. Thus once again, with respect to this fixed argument Kripke model $\N_g$ and this fixed possible argument $\{s\}$, suppose its corresponding generated epistemic Kripke model is $\M_{gg}=(S',E',F')$. We reasonably expect that $\M_{gg}$, even though may not be exactly identical to the original $\M$, must resemble $\M$ to a certain extent. Nonetheless, in order to pin down such intuition precisely, we firstly have to determine which possible world in $S'$ should correspond to the original $s\in S$, namely, represent the currently real world.
\begin{lemma}\label{lem.pu}
For any $t\in S$ namely $\{t\}\in\W$, there does not exist $U\in\W$ such that $U\leq\{t\}\land\{t\}\nleq U$, where relation $\leq$ is defined in Definition~\ref{def.pre}.
\end{lemma}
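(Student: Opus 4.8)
The plan is to first rewrite the attack relation of $\N_g$ in a more transparent form and then show directly that $\{t\}$ is $\leq$-minimal, using only the irreflexivity of $A(p)$. For Step 1, unwinding the definition of $A(p)$ in Definition~\ref{def.k2a}, one checks that for $U,V\in\W$, $(U,V)\in A(p)$ holds iff there is some $t'\in U$ that disagrees with \emph{every} $r\in V$ about $p$; equivalently, either $V\subseteq F(p)$ and $U\not\subseteq F(p)$, or $V\cap F(p)=\emptyset$ and $U\cap F(p)\neq\emptyset$. The key special case is a singleton target: since $\{t\}\subseteq F(p)$ just means $t\in F(p)$ and $\{t\}\cap F(p)=\emptyset$ just means $t\notin F(p)$, we get that $(W,\{t\})\in A(p)$ iff some $w\in W$ disagrees with $t$ about $p$. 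I will also record the dual reading of $(U,\{t\})\in A(p)$ for later use: it holds whenever $U\neq\emptyset$ and every element of $U$ disagrees with $t$ on $p$ (in particular, whenever $t\in F(p)$ and $U\cap F(p)=\emptyset$, or whenever $t\notin F(p)$ and $U\subseteq F(p)$).

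For Step 2, it suffices to prove the contrapositive-flavored statement: if $U\leq\{t\}$ then $\{t\}\leq U$. So fix $U\in\W$ with $U\leq\{t\}$, fix $p\in P$ and $W\in\W$ with $(W,U)\in A(p)$, and aim to produce $w\in W$ disagreeing with $t$ about $p$, which by Step 1 is exactly $(W,\{t\})\in A(p)$. By the reformulation, $(W,U)\in A(p)$ falls into one of two cases (and, since $U\neq\emptyset$, at most one of them can hold): case (A), $U\subseteq F(p)$ and $W\not\subseteq F(p)$; case (B), $U\cap F(p)=\emptyset$ and $W\cap F(p)\neq\emptyset$. In case (A) I claim $t\in F(p)$; otherwise every element of $U$ (all lying in $F(p)$) disagrees with $t$, so $(U,\{t\})\in A(p)$, and then $U\leq\{t\}$ forces $(U,U)\in A(p)$, contradicting the irreflexivity of $A(p)$ required by Definition~\ref{def.akm}. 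Given $t\in F(p)$, any $w\in W\setminus F(p)$ (which exists because $W\not\subseteq F(p)$) disagrees with $t$, so $(W,\{t\})\in A(p)$. Case (B) is entirely symmetric with the roles of $F(p)$ and its complement interchanged: $U\cap F(p)=\emptyset$ forces $t\notin F(p)$ on pain of $(U,U)\in A(p)$, and then any $w\in W\cap F(p)$ disagrees with $t$. Either way $(W,\{t\})\in A(p)$, so $\{t\}\leq U$, as desired.

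I do not expect a genuine obstacle here; the work is entirely bookkeeping, and the only places to be careful are: (i) getting the quantifier structure of $A(p)$ right and noticing that for a \emph{singleton} target the clause ``disagrees with every world of $V$'' collapses to ``disagrees with $t$''; (ii) remembering that $U$ is nonempty, which is what turns ``$U\subseteq F(p)$'' (resp.\ ``$U\cap F(p)=\emptyset$'') into an actual witness for $(U,\{t\})\in A(p)$; and (iii) feeding the pair $(U,\{t\})$ — not $(W,\{t\})$ — into the hypothesis $U\leq\{t\}$, so that the conclusion $(U,U)\in A(p)$ triggers irreflexivity. Conceptually, the statement simply says that singleton arguments are $\leq$-strongest, which is what will later let us identify the principal ultrafilter at $\{s\}$ as the possible world of $\M_{gg}$ corresponding to $s$.
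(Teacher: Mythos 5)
Your proof is correct and uses essentially the same idea as the paper's: unfold $A(p)$ in the generated model, observe that if all of $U$ disagreed with $t$ on $p$ then $(U,\{t\})\in A(p)$, and feed the pair $(U,\{t\})$ into the hypothesis $U\leq\{t\}$ to obtain $(U,U)\in A(p)$, contradicting irreflexivity. The only difference is organizational — you prove $U\leq\{t\}\Rightarrow\{t\}\leq U$ directly via a case split on the form of $(W,U)\in A(p)$, whereas the paper argues by contradiction from a witness of $\{t\}\nleq U$ — but the key step is identical.
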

\begin{proof}
To a contradiction suppose such $U$ exists, since $\{t\}\nleq U$, there exists $p\in P$, $V\in\W$ such that $(V,U)\in A(p)\land(V,\{t\})\notin A(p)$, hence there does not exist $v\in V$ such that $(v\in F(p)\land t\notin F(p))\lor(v\notin F(p)\land t\in F(p))$, namely for any $v\in V$, $v\in F(p)\iff t\in F(p)$. And then as $(V,U)\in A(p)$, it is easy to see that for any $u\in U$, $v\in V$, $u\in F(p)\iff v\notin F(p)\iff t\notin F(p)$, thus $(U,\{t\})\in A(p)$ but of course $(U,U)\notin A(p)$, therefore $U\nleq\{t\}$, a contradiction.
\end{proof}
\begin{proposition}\label{pro.pu}
For any $t\in S$, $\tau(t)=\{U\in\W\mid\{t\}\leq U\}\subseteq\W$ is an ultrafilter.
\end{proposition}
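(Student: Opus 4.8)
The plan is to verify the three defining properties of an ultrafilter for $\tau(t)$ in turn, leaning heavily on Lemma~\ref{lem.pu}, which is precisely the key structural fact that makes $\{t\}$ behave like a ``$\leq$-minimal'' element (up to the equivalence induced by the preorder). First I would record the trivial observations: $\tau(t)\neq\emptyset$ because $\{t\}\leq\{t\}$ by reflexivity, so $\{t\}\in\tau(t)$; and $\tau(t)$ is upward closed under $\leq$ essentially by transitivity, since if $U\in\tau(t)$ and $U\leq W$ then $\{t\}\leq U\leq W$ gives $W\in\tau(t)$. These match the first and second filter clauses directly.

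Next I would check the downward-directedness clause: given $U,V\in\tau(t)$, I need $W\in\tau(t)$ with $W\leq U$ and $W\leq V$. The natural candidate is $W=\{t\}$ itself, which certainly lies in $\tau(t)$; so it suffices to show $\{t\}\leq U$ and $\{t\}\leq V$, and this is exactly the defining condition $U,V\in\tau(t)$. So $\tau(t)$ is a filter. For properness I must show $\tau(t)\neq\W$, equivalently that some argument is \emph{not} $\leq$-above $\{t\}$; here I would exhibit a concrete witness — for instance, since $A(p)$ must be nonempty-witnessed somewhere or, more robustly, I can argue that if $\tau(t)=\W$ then every argument is $\leq$-above $\{t\}$, and combined with Lemma~\ref{lem.pu} (which forbids any $U$ with $U\leq\{t\}$ and $\{t\}\nleq U$) this would force $\{t\}$ to be $\leq$-equivalent to everything, collapsing the preorder; I expect the nontriviality hypothesis, or the existence of at least one genuine attack in $\N_g$ (which follows as soon as $P\neq\emptyset$ and $S$ has more than one world, and can be handled directly otherwise), to rule this out. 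I should be a little careful to state which background assumptions I am invoking here and to dispatch the degenerate single-world case separately.

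The main obstacle — and the step I would spend the most care on — is maximality: showing that no filter $\V$ properly contains $\tau(t)$. Suppose $\tau(t)\subsetneq\V$ with $\V$ a filter, and pick $U\in\V\setminus\tau(t)$, so $\{t\}\nleq U$. The idea is to derive a contradiction by producing an element of $\V$ that cannot lie in any proper filter. Since $\V$ is a filter and $\{t\}\in\tau(t)\subseteq\V$, by directedness there is $W'\in\V$ with $W'\leq\{t\}$ and $W'\leq U$. Now Lemma~\ref{lem.pu} applies to $W'$ versus $\{t\}$: it says there is no argument strictly below $\{t\}$, so from $W'\leq\{t\}$ we must have $\{t\}\leq W'$ as well, i.e.\ $W'$ is $\leq$-equivalent to $\{t\}$. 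But then $W'\leq U$ gives $\{t\}\leq U$ by transitivity, contradicting $U\notin\tau(t)$. Hence no such $\V$ exists and $\tau(t)$ is an ultrafilter. I would double-check that Lemma~\ref{lem.pu} is genuinely symmetric in the needed way — it is stated as ``no $U$ with $U\leq\{t\}\land\{t\}\nleq U$'', which is exactly ``$W'\leq\{t\}$ implies $\{t\}\leq W'$'' — and then the argument closes cleanly.
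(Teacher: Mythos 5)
Your verification of the filter axioms and your maximality argument are correct and essentially identical to the paper's: nonemptiness and upward closure from reflexivity and transitivity, $\{t\}$ itself as the witness for directedness, and for maximality the use of directedness to produce $W'\in\V$ with $W'\leq\{t\}$ and $W'\leq U$, then Lemma~\ref{lem.pu} to get $\{t\}\leq W'$ and transitivity to contradict $U\notin\tau(t)$. The paper phrases the final contradiction slightly differently (it derives $\{t\}\nleq W'$ from $U\notin\tau(t)$ and contradicts Lemma~\ref{lem.pu} directly), but the two are the same argument.

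The one genuine gap is properness. Your primary mechanism there is a non-sequitur: if $\tau(t)=\W$ then $\{t\}\leq U$ for every $U$, and Lemma~\ref{lem.pu} (which only forbids $U\leq\{t\}\land\{t\}\nleq U$) then says nothing at all, so it does not force $\{t\}$ to be $\leq$-equivalent to everything, and no ``collapse of the preorder'' follows. Your alternative suggestion of exhibiting a concrete attack witness can be made to work but needs a case analysis you only gesture at. The clean argument, and the one the paper uses, is shorter: if $\tau(t)=\W$, then $\W$ itself satisfies all three filter clauses (it is nonempty, trivially upward closed, and directed because $\{t\}$ is a common lower bound for everything), so $\W$ is a filter; but then every proper filter is properly contained in a filter and hence fails maximality, so the model has no ultrafilters at all, contradicting the standing assumption that $\N_g$ is nontrivial. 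You correctly sensed that nontriviality is the hypothesis to invoke, but the route from ``$\tau(t)=\W$'' to ``no ultrafilters exist'' goes through ``$\W$ is a filter,'' not through any collapse of $\leq$.
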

\begin{proof}
Since $\{t\}\leq\{t\}$, $\{t\}\in\tau(t)$ so $\tau(t)\neq\emptyset$. For any $U\in\tau(t)$, any $W\in\W$, if $U\leq W$ then by transitivity $\{t\}\leq W$, hence $W\in\tau(t)$. For any $U,V\in\tau(t)$, $\exists\{t\}\in\tau(t)$ such that $\{t\}\leq U\land\{t\}\leq V$. Thus $\tau(t)$ is a filter. To a contradiction suppose $\tau(t)=\W$, then any proper filter $\U\subset\W$ is not an ultrafilter, namely there are no ultrafilters, contradicting our assumption that $\N_g$ is nontrivial. Thus $\tau(t)\subset\W$ is a proper filter. To a contradiction suppose there exists $\U\subseteq\W$ such that $\tau(t)\subset\U$ and that $\U$ is a filter, then there exists $U\in\U\setminus\tau(t)$, as $\U$ is a filter, there exists $V\in\U$ such that $V\leq U\land V\leq\{t\}$, and because $U\notin\tau(t)$, we also have $V\notin\tau(t)$, namely $\{t\}\nleq V$, contradicting the previous Lemma~\ref{lem.pu}. Thus $\tau(t)$ is an ultrafilter.
\end{proof}
\begin{definition}[Principal Ultrafilter]
As Proposition~\ref{pro.pu} shows, for any $t\in S$, we call $\tau(t)\in S'$ as $t$'s principal ultrafilter.
\end{definition}
Although the definition here is slightly different from ordinary principal ultrafilters over a Boolean algebra \cite{Hodges97}, readers should feel convinced that the core idea is the same and so we are quite justified for usage of the same name. Then of course, the current possible world in the ``generated generated'' epistemic Kripke model $\M_{gg}$ should be $\tau(s)\in S'$, namely, the principal ultrafilter of the original current world $s\in S$.

A little more preparation is required: we additionally assume that for any proposition $p\in P$, $\M,s\vDash p$ i.e. $s\in F(p)$. The reason is that an argument Kripke model can only inform us about the \textit{relative} attack relations between arguments with respect to some proposition $p$, but nothing concerned with their \textit{absolute} true-hood or falsehood, hence such kind of assumption, or to be called more accurately as convention or protocol, is necessarily needed. Anyhow, this restriction actually does not matter a lot, since due to symmetry between true and false we can always freely choose either $p$ or $\neg p$ as the proper basic proposition so as to meet the assumption's requirements. We then firstly prove a series of useful lemmata under our current assumption:
\begin{proposition}
For any $U,V\in\W$ such that $U\subseteq V$, $U\leq V$.
\end{proposition}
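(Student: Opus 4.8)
The plan is to simply unwind the two relevant definitions --- the attack relation $A(p)$ of the generated model $\N_g$ from Definition~\ref{def.k2a}, and the preorder $\leq$ from Definition~\ref{def.pre} --- and notice that the claim collapses to a one-line monotonicity remark. Nothing about ultrafilters, $E(i)$, or the current-world convention is needed here.

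Concretely, I would fix $U,V\in\W$ with $U\subseteq V$ and, in order to establish $U\leq V$ via Definition~\ref{def.pre}, take an arbitrary proposition $p\in P$ and an arbitrary $W\in\W$ with $(W,V)\in A(p)$; the goal is then $(W,U)\in A(p)$. Unfolding $(W,V)\in A(p)$ by Definition~\ref{def.k2a} produces a witness $t\in W$ such that for every $r\in V$ the worlds $t$ and $r$ disagree about $p$, i.e. $(t\in F(p)\land r\notin F(p))\lor(t\notin F(p)\land r\in F(p))$. I would keep this very same $t$ as the witness for $(W,U)\in A(p)$: since $U\subseteq V$, every $r\in U$ is in particular an element of $V$, so $t$ disagrees about $p$ with every $r\in U$ as well, which is exactly the condition $(W,U)\in A(p)$ demanded by Definition~\ref{def.k2a}. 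That finishes the argument; note also that $U\neq\emptyset$ since $U\in\W$, so there is nothing degenerate to worry about.

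I do not expect any genuine obstacle here; the statement is immediate once the definitions are unravelled. The only point deserving a moment's care is the asymmetry of the roles inside $A(p)$: the ``attacked'' argument ($V$, then $U$) sits under a universal quantifier over its elements, whereas the ``attacking'' argument $W$ contributes an existentially quantified witness world $t$. Shrinking the attacked side from $V$ to a subset $U$ therefore only weakens the universal part of the condition, so the existential statement survives --- which is precisely why $U\subseteq V$ forces $U\leq V$ and not the reverse, in keeping with the paper's convention that smaller arguments are the stronger ones.
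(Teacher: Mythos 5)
Your proof is correct and matches the paper's, which simply asserts the fact is ``verified straightforwardly from definition'': reusing the witness $t\in W$ and shrinking the universally quantified side from $V$ to $U\subseteq V$ is exactly the intended one-line argument. One small terminological slip in your closing remark: by the paper's \warning, $(W,V)\in A(p)$ means ``$V$ attacks $W$'', so $V$ and $U$ are the \emph{attacking} arguments (the ones carrying the universal quantifier) while $W$ is the \emph{attacked} one (supplying the existential witness) --- the quantifier bookkeeping you actually used is right, only the labels are swapped.
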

\begin{proof}
This fact can be verified straightforwardly from definition.
\end{proof}
\begin{lemma}\label{lem.bas}
For any $t\in S$ and any $p\in P$, $\M_{gg},\tau(t)\vDash p\iff\M,t\vDash p$.
\end{lemma}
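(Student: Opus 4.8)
The plan is to unwind the definitions and reduce the lemma to a purely set‑theoretic statement about subsets of $S$. By Definition~\ref{def.sem}, $\M_{gg},\tau(t)\vDash p$ holds iff $\tau(t)\in F'(p)$, and by Definition~\ref{def.a2k} (applied to $\N_g$ with current argument $\{s\}$) this unfolds to $\exists U\in\tau(t)\,\forall V\in\tau(t)\,\bigl(V\leq U\to(V,\{s\})\notin A(p)\bigr)$. The first step is to simplify the clause $(V,\{s\})\notin A(p)$ using the standing assumption $s\in F(p)$: unfolding the attack relation of $\N_g$ from Definition~\ref{def.k2a}, $(V,\{s\})\in A(p)$ asks for some $x\in V$ with $(x\in F(p)\land s\notin F(p))\lor(x\notin F(p)\land s\in F(p))$, and since $s\in F(p)$ the first disjunct is vacuous, so $(V,\{s\})\in A(p)\iff V\not\subseteq F(p)$, i.e.\ $(V,\{s\})\notin A(p)\iff V\subseteq F(p)$. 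Hence the goal collapses to: $t\in F(p)$ iff $\exists U\in\tau(t)\,\forall V\in\tau(t)\,(V\leq U\to V\subseteq F(p))$.

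For the direction from the right‑hand side to $t\in F(p)$, I would simply instantiate. Recall $\tau(t)=\{U\in\W\mid\{t\}\leq U\}$ from Proposition~\ref{pro.pu}, so $\{t\}\in\tau(t)$ by reflexivity of $\leq$, and any witness $U$ satisfies $\{t\}\leq U$. Plugging $V:=\{t\}$ into the universal statement gives $\{t\}\subseteq F(p)$, that is $t\in F(p)$. This half is immediate.

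For the converse, assume $t\in F(p)$ and take $U:=\{t\}\in\tau(t)$ as the witness. Given any $V\in\tau(t)$ with $V\leq\{t\}$, I would show $V\subseteq F(p)$ by contradiction: if some $v\in V$ had $v\notin F(p)$, then the singleton $\{v\}\in\W$ would attack $\{t\}$ with respect to $p$ — because $v\notin F(p)$ and $t\in F(p)$ satisfy the second disjunct of Definition~\ref{def.k2a} — so $V\leq\{t\}$ would force $(\{v\},V)\in A(p)$; but unwinding the attack relation again, $(\{v\},V)\in A(p)$ with $v\notin F(p)$ requires exactly $V\subseteq F(p)$, contradicting $v\in V\setminus F(p)$. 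Therefore $V\subseteq F(p)$, which verifies that $U=\{t\}$ is a genuine witness.

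The real work — such as it is — lies entirely in the bookkeeping of the first step: correctly unfolding the two nested quantifiers in the definition of $A(p)$ for $\N_g$, and noticing that the standing convention $s\in F(p)$ is precisely what kills one disjunct and turns ``$\{s\}$ is not attacked by $V$'' into ``$V$ is $p$-true''. Once that reduction is in hand, the singleton argument $\{v\}$ supplies the one nonobvious idea, and the role of Lemma~\ref{lem.pu} together with Proposition~\ref{pro.pu} is merely to guarantee that $\tau(t)$ is a legitimate world of $\M_{gg}$ into which $\{t\}$ can be fed both as the witness $U$ and as the test element $V$; no further property of ultrafilters or of the preorder $\leq$ is needed here.
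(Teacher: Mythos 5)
Your proof is correct and follows essentially the same route as the paper's: both reduce $\tau(t)\in F'(p)$ via the convention $s\in F(p)$ to the condition $\exists U\in\tau(t)\forall V\in\tau(t)\,(V\leq U\to V\subseteq F(p))$, instantiate $V=\{t\}$ for one direction, and for the other take $U=\{t\}$ and use a singleton attacked by $\{t\}$ to force $V\subseteq F(p)$ from $V\leq\{t\}$. The only cosmetic differences are that you run the converse by contradiction with $\{v\}$ for $v\in V\setminus F(p)$ where the paper argues directly with $\{u\}$ for $u\in S\setminus F(p)$ after a case split on whether such a $u$ exists, and that your phrase ``$\{v\}$ attacks $\{t\}$'' has the direction of the paper's \warning{} convention backwards, though the symbolic fact you actually use, $(\{v\},\{t\})\in A(p)$, is the correct one.
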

\begin{proof}
We have
\begin{align*}
\M_{gg},\tau(t)\vDash p\iff & \tau(t)\in F'(p)\\
\iff & \exists U\in\tau(t)\forall V\in\tau(t),V\leq U\to(V,\{s\})\notin A(p)\\
\iff & \exists U\in\tau(t)\forall V\in\tau(t),V\leq U\to\neg\exists r\in V\forall s\in\{s\},\\
& (r\in F(p)\land s\notin F(p))\lor(r\notin F(p)\land s\in F(p))\\
\iff & \exists U\in\tau(t)\forall V\in\tau(t),V\leq U\to\forall r\in V,\\
& r\in F(p)\leftrightarrow s\in F(p)\\
\iff & \exists U\in\tau(t)\forall V\in\tau(t),V\leq U\to\forall r\in V,r\in F(p) & (\dag)
\end{align*}
On the one hand, suppose (\dag) holds, then since $U\in\tau(t)$, we have $\{t\}\leq U$, so $\forall r\in\{t\},r\in F(p)$, hence $t\in F(p)$. On the other hand, suppose $t\in F(p)$, then $\{t\}\in\tau(t)$, and for any $V\in\tau(t)$ such that $V\leq\{t\}$, namely for any $V\in\tau(t)$ such that $\forall q\in P\forall W\in\W$, $(W,\{t\})\in A(q)\to(W,V)\in A(q)$, we thus have $(W,\{t\})\in A(p)\to(W,V)\in A(p)$. If $\neg\exists u\in S$ such that $u\notin F(p)$, then of course $\forall r\in V\subseteq S$, $r\in F(p)$. If $\exists u\in S$ such that $u\notin F(p)$, then $(\{u\},\{t\})\in A(p)$ thus $(\{u\},V)\in A(p)$, therefore we still have $\forall r\in V$, $r\in F(p)$. Anyway, (\dag) always holds. In total, we have $(\dag)\iff t\in F(p)$, namely $\M_{gg},\tau(t)\vDash p\iff\M,t\vDash p$.
\end{proof}
\begin{lemma}\label{lem.zig}
For any $t\in S$ and any $i\in I$, if $(s,t)\in E(i)$ then $(\tau(s),\tau(t))\in E'(i)$.
\end{lemma}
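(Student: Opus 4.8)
The plan is to unwind the relevant definitions and reduce the claim to the set equality $\tau(s)\cap G(i)=\tau(t)\cap G(i)$, which is exactly what $(\tau(s),\tau(t))\in E'(i)$ abbreviates according to the clause defining $E'$ in Definition~\ref{def.a2k}. First I would record that, since $E(i)$ is an equivalence relation and $(s,t)\in E(i)$, the two epistemic equivalence classes coincide: $\{r\in S\mid(s,r)\in E(i)\}=\{r\in S\mid(t,r)\in E(i)\}$; call this common set $C$. By the definition of the availability function of the generated model $\N_g$ (Definition~\ref{def.k2a}), this yields $G(i)=\{U\in\W\mid C\subseteq U\}$, and in particular every $U\in G(i)$ contains both $s$ and $t$, because $s$ and $t$ lie in $C$ by reflexivity of $E(i)$.

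Next I would invoke the Proposition stating that $U\subseteq V$ implies $U\leq V$. Applying it with the singletons $\{s\},\{t\}\in\W$ playing the role of the smaller set, we get $\{s\}\leq U$ and $\{t\}\leq U$ for every $U\in G(i)$, that is, $U\in\tau(s)$ and $U\in\tau(t)$ by the definition of principal ultrafilters in Proposition~\ref{pro.pu}. Hence $G(i)\subseteq\tau(s)$ and $G(i)\subseteq\tau(t)$, so intersecting with $G(i)$ collapses both sides to $G(i)$ itself: $\tau(s)\cap G(i)=G(i)=\tau(t)\cap G(i)$. By Definition~\ref{def.a2k} this is precisely $(\tau(s),\tau(t))\in E'(i)$; that $\tau(s),\tau(t)\in S'$ in the first place is guaranteed by Proposition~\ref{pro.pu} together with the standing assumption that $\N_g$ is nontrivial.

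I do not expect a serious obstacle here: the whole argument is bookkeeping once one spots the key simplification that $G(i)$ sits inside both principal ultrafilters, so the two intersections appearing in the definition of $E'(i)$ both reduce to $G(i)$. The only point that needs a little care is the notational matching — the availability function $G$ used to define $E'(i)$ inside $\M_{gg}$ is the one inherited from $\N_g$, and $G(i)$ depends only on the $E(i)$-class of $s$, which is exactly why replacing $s$ by the $E(i)$-equivalent world $t$ leaves $G(i)$ unchanged. Note also that, in contrast with Lemma~\ref{lem.bas}, this argument never uses the convention $s\in F(p)$.
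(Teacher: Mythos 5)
Your proposal is correct and follows essentially the same route as the paper: both rest on the observation that $U\in G(i)$ forces $U$ to contain the common $E(i)$-class of $s$ and $t$, whence $\{s\}\leq U$ and $\{t\}\leq U$ by the subset-implies-$\leq$ proposition, so $U$ lies in both principal ultrafilters. Your version is marginally tidier in noting the stronger fact $G(i)\subseteq\tau(s)\cap\tau(t)$, which collapses both intersections to $G(i)$ at once, whereas the paper argues the two inclusions between $\tau(s)\cap G(i)$ and $\tau(t)\cap G(i)$ directly.
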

\begin{proof}
For any $U\in\tau(s)\cap G(i)$ we have $t\in U$, hence $\{t\}\leq U$, so also $U\in\tau(t)\cap G(i)$, and vice versa, thus $\tau(s)\cap G(i)=\tau(t)\cap G(i)$, namely $(\tau(s),\tau(t))\in E'(i)$.
\end{proof}
\begin{lemma}\label{lem.zag}
For any $i\in I$, $p\in P$ and any $\U\in S'$ such that $(\tau(s),\U)\in E'(i)$,
\begin{enumerate}
    \item if $\M_{gg},\U\vDash p$ then there exists $t\in S$ such that $(s,t)\in E(i)$ and that $\M,t\vDash p$.
    \item if $\M_{gg},\U\nvDash p$ then there exists $t\in S$ such that $(s,t)\in E(i)$ and that $\M,t\nvDash p$.
\end{enumerate}
\end{lemma}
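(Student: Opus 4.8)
This is the ``zag'' half of the bisimulation-style comparison between $\M$ and $\M_{gg}$, dual to Lemmata~\ref{lem.zig} and~\ref{lem.bas}, and the plan is to prove both items by contradiction along a common template. Throughout, write $[s]_i=\{t\in S\mid(s,t)\in E(i)\}$ for the $E(i)$-equivalence class of $s$; the heart of the argument is a comparison, carried out inside the ultrafilter $\U$, between $[s]_i$ and a witnessing argument extracted from the truth value of $p$ at $\U$.

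First I would unwind two definitions into purely set-theoretic statements about $F(p)\subseteq S$. Using the standing convention $s\in F(p)$, the clause $(V,\{s\})\notin A(p)$ appearing in the definition of $F'(p)$ is equivalent to $V\subseteq F(p)$; hence $\M_{gg},\U\vDash p$ says precisely that there is $U_0\in\U$ with $V\subseteq F(p)$ for every $V\in\U$ satisfying $V\leq U_0$, and $\M_{gg},\U\nvDash p$ is the negation of this. Secondly, if $Z\in\W$ is $p$-homogeneous (all its worlds agree on $p$) then $(W,Z)\in A(p)$ holds iff $W$ contains some world on the side of $p$ opposite to $Z$, whereas $(W,Z)\notin A(p)$ for every $W\in\W$ when $Z$ is $p$-mixed. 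Both reductions are routine and already implicit in the computation performed inside the proof of Lemma~\ref{lem.bas}.

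The key structural step is that $[s]_i\in\U$. Indeed $s\in[s]_i$, so $\{s\}\subseteq[s]_i$, whence $\{s\}\leq[s]_i$ by the earlier proposition that $\subseteq$ implies $\leq$, i.e.\ $[s]_i\in\tau(s)$; moreover $[s]_i\in G(i)$ by definition of the availability function of $\N_g$; therefore $[s]_i\in\tau(s)\cap G(i)$, and since $(\tau(s),\U)\in E'(i)$ unfolds to $\tau(s)\cap G(i)=\U\cap G(i)$, we get $[s]_i\in\U$. For item~1, suppose toward a contradiction that $[s]_i\cap F(p)=\emptyset$; applying filter-directedness of $\U$ to $[s]_i,U_0\in\U$ yields $V_0\in\U$ with $V_0\leq[s]_i$ and $V_0\leq U_0$, so $V_0\subseteq F(p)$ by the first reduction. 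For item~2, suppose toward a contradiction that $[s]_i\subseteq F(p)$; the negation of the first reduction, instantiated with $[s]_i\in\U$ in place of $U_0$, yields $V_1\in\U$ with $V_1\leq[s]_i$ and $V_1\not\subseteq F(p)$.

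Finally I would extract the contradiction by feeding a well-chosen $W\in\W$ into $V_0\leq[s]_i$ (resp.\ $V_1\leq[s]_i$) via Definition~\ref{def.pre}. In item~1 both $[s]_i\subseteq S\setminus F(p)$ and $V_0\subseteq F(p)$ are $p$-homogeneous, so by the second reduction $W=F(p)$ --- nonempty because $s\in F(p)$ --- satisfies $(W,[s]_i)\in A(p)$ but $(W,V_0)\notin A(p)$, contradicting $V_0\leq[s]_i$. In item~2, $[s]_i\subseteq F(p)$ is $p$-homogeneous; if $V_1$ is $p$-mixed then $W=S$ gives $(S,[s]_i)\in A(p)$ while $(S,V_1)\notin A(p)$, and if $V_1\subseteq S\setminus F(p)$ then $W=S\setminus F(p)$ --- nonempty since $\emptyset\neq V_1\subseteq S\setminus F(p)$ --- does the same, in either case contradicting $V_1\leq[s]_i$. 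The one place that genuinely needs care, and where I would expect a proof to be most likely to slip, is precisely this bookkeeping of nonemptiness: every set handed over as $W$ (and $V_1$ itself) must really lie in $\W=\P(S)\setminus\emptyset$, which is exactly what the convention $s\in F(p)$ and the membership $V_1\in\W$ secure.
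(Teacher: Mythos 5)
Your proof is correct, and its skeleton coincides with the paper's: you establish $[s]_i\in\U$ by exactly the same chain ($\{s\}\subseteq[s]_i$ gives $[s]_i\in\tau(s)$, $[s]_i\in G(i)$, and $\tau(s)\cap G(i)=\U\cap G(i)$), you unfold $\M_{gg},\U\vDash p$ into the statement that some $U_0\in\U$ has every $\leq$-minorant in $\U$ contained in $F(p)$, and you then use filter directedness on $[s]_i,U_0$ (item 1) or instantiation of the negation at $[s]_i$ (item 2) to produce a $V\in\U$ with $V\leq[s]_i$ carrying the relevant information about $F(p)$. The only divergence is the finishing move. The paper concludes directly: since $A(p)$ is irreflexive, $(V,V)\notin A(p)$, and $V\leq[s]_i$ then forces $(V,[s]_i)\notin A(p)$, which unwinds immediately to the existence of the desired $t\in[s]_i$ on the correct side of $F(p)$. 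You instead argue by contradiction, assume $[s]_i$ is $p$-homogeneous on the wrong side, and exhibit an explicit attacker $W\in\{F(p),\,S,\,S\setminus F(p)\}$ that attacks $[s]_i$ but not $V$, refuting $V\leq[s]_i$; all your case distinctions and nonemptiness checks go through. Both routes are sound; the paper's self-attack trick is slightly slicker in that it needs no case analysis on whether $V$ is $p$-mixed or homogeneous and no bookkeeping of witness nonemptiness, which you rightly flagged as the delicate point of your own version.
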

\begin{proof}
Denote $[s]_i=\{t\in S\mid(s,t)\in E(i)\}\in\W$. Since $s\in[s]_i$, $\{s\}\leq[s]_i$ namely $[s]_i\in\tau(s)$, also since $[s]_i\in G(i)$, we have $[s]_i\in\tau(s)\cap G(i)$. As $(\tau(s),\U)\in E'(i)$, namely $\tau(s)\cap G(i)=\U\cap G(i)$, hence $[s]_i\in\U$.

Moreover the same as (\dag) in Lemma~\ref{lem.bas}, we have $\M_{gg},\U\vDash p\iff\exists U\in\U\forall V\in\U,V\leq U\to\forall r\in V,r\in F(p)$.
\begin{enumerate}
    \item If $\M_{gg},\U\vDash p$, namely $\exists U\in\U\forall V\in\U,V\leq U\to\forall r\in V,r\in F(p)$, then as $[s]_i\in\U$ and $\U$ is a filter, there exists $W\in\U$ such that $W\leq[s]_i\land W\leq U$, hence $\forall r\in W$, $r\in F(p)$, and because $W\neq\emptyset$ we have $\exists r\in W$, $r\in F(p)$. Since $W\leq[s]_i$ and $(W,W)\notin A(p)$, we have $(W,[s]_i)\notin A(p)$, therefore $\exists t\in[s]_i$ such that $t\in F(p)$, namely $(s,t)\in E(i)$ and $\M,t\vDash p$.
    \item If $\M_{gg},\U\nvDash p$, namely $\forall U\in\U\exists V\in\U,V\leq U\land\exists r\in V,r\notin F(p)$, then as $[s]_i\in\U$, there exists $V\in\U$ such that $V\leq[s]_i\land\exists r\in V$, $r\notin F(p)$. Since $V\leq[s]_i$ and $(V,V)\notin A(p)$, we have $(V,[s]_i)\notin A(p)$, therefore $\exists t\in[s]_i$ such that $t\notin F(p)$, namely $(s,t)\in E(i)$ and $\M,t\nvDash p$.
\end{enumerate}
\end{proof}
Finally we state our main result as the following duality theorem:
\begin{theorem}[Duality Theorem]\label{the.dua}
For any epistemic formula $\phi$ where epistemic modalities only act upon atomic propositions and their negations, we have $\M,s\vDash\phi\iff\M_{gg},\tau(s)\vDash\phi$.
\end{theorem}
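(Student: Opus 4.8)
The plan is to prove the equivalence by a straightforward structural induction on $\phi$, with essentially all the substance deferred to the lemmas already established. Because every epistemic modality occurring in $\phi$ is applied directly to an atom or a negated atom, $\phi$ is nothing but a Boolean combination, via $\neg$ and $\land$, of formulae from the base set $\{p\mid p\in P\}\cup\{\K_i p\mid i\in I,\ p\in P\}\cup\{\K_i\neg p\mid i\in I,\ p\in P\}$. So I would first settle the equivalence $\M,s\vDash\phi\iff\M_{gg},\tau(s)\vDash\phi$ for each member of this base set, and then observe that it is trivially preserved by $\neg$ and $\land$: the negation step is immediate from the semantic clause for $\neg$, and the conjunction step is immediate from the clause for $\land$ together with the induction hypothesis applied to the two conjuncts. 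Hence the whole argument boils down to three base cases.

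For $\phi=p$ the claim is exactly Lemma~\ref{lem.bas} instantiated at $t=s$. For $\phi=\K_i p$ I would argue in two directions. If $\M,s\vDash\K_i p$, then for every $\U\in S'$ with $(\tau(s),\U)\in E'(i)$ we cannot have $\M_{gg},\U\nvDash p$, for otherwise Lemma~\ref{lem.zag}(2) would yield a world $t$ with $(s,t)\in E(i)$ and $\M,t\nvDash p$, contradicting $\M,s\vDash\K_i p$; thus $\M_{gg},\tau(s)\vDash\K_i p$. Conversely, if $\M_{gg},\tau(s)\vDash\K_i p$, then for any $t$ with $(s,t)\in E(i)$ Lemma~\ref{lem.zig} gives $(\tau(s),\tau(t))\in E'(i)$, so $\M_{gg},\tau(t)\vDash p$, and Lemma~\ref{lem.bas} upgrades this to $\M,t\vDash p$; hence $\M,s\vDash\K_i p$. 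The case $\phi=\K_i\neg p$ is the mirror image: in the first direction one uses Lemma~\ref{lem.zag}(1) in place of Lemma~\ref{lem.zag}(2), and in the second direction one reads off $\M_{gg},\tau(t)\vDash\neg p\iff\M_{gg},\tau(t)\nvDash p\iff\M,t\nvDash p$ again from Lemma~\ref{lem.bas}.

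I do not expect a genuine obstacle: the hard work has already been discharged by the ``zig'' and ``zag'' lemmas (Lemmas~\ref{lem.zig} and~\ref{lem.zag}) and the atomic correspondence (Lemma~\ref{lem.bas}), and the induction itself is pure bookkeeping. The point genuinely worth flagging is why the statement is confined to formulae whose modalities bind only atoms and negated atoms: the lemmas provide merely a \emph{single-step} match between the $E(i)$-neighbours of $s$ and the $E'(i)$-neighbours of $\tau(s)$, and an arbitrary ultrafilter $\U$ in the $E'(i)$-class of $\tau(s)$ need be neither the principal ultrafilter of any world nor even in agreement with a single world on all atoms at once, so the argument cannot be iterated through nested modalities. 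Restricting to flat modal formulae is exactly what makes one application of the zig/zag lemmas suffice, and the only residual care is to track the inner negation in the $\K_i\neg p$ case and to invoke the correct clause of Lemma~\ref{lem.zag}.
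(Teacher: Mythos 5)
Your proposal is correct and follows essentially the same route as the paper's own (much terser) proof: the atomic case is Lemma~\ref{lem.bas}, the direction from $\M$ to $\M_{gg}$ for $\K_i p$ and $\K_i\neg p$ uses Lemma~\ref{lem.zag} (clauses (2) and (1) respectively, via contraposition), the converse direction uses Lemmata~\ref{lem.zig} and~\ref{lem.bas}, and the Boolean cases are immediate by induction. Your closing observation about why the zig/zag matching cannot be iterated through nested modalities is a correct and worthwhile gloss on the restriction in the theorem's statement.
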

\begin{proof}
The atomic case when $\phi$ is the form of $p$ follows from Lemma~\ref{lem.bas}. The modal case when $\phi$ is the form of $\K_i p$ or $\K_i\neg p$ follows from Lemmata \ref{lem.bas} and \ref{lem.zig} for one direction, as well as Lemma~\ref{lem.zag} for another direction.
\end{proof}
It may seem a little pity that according to Theorem~\ref{the.dua}, only agents' knowledge about atomic propositions (and their negations) is ensured to be preserved from $\M$ to $\M_{gg}$, but not necessarily for more complex knowledge like $\K_i(p\to q)$ or mutual knowledge like $\K_i\K_j p$. However, such limitation is exactly due to the fact that in Definition~\ref{def.akm} of argument Kripke models, possible arguments can attack each other only with respect to each single atomic proposition, but not with respect to a set of propositions like $\{p,q\}$ or agents' knowledge like $\K_jp$. For simplicity and clarity, we choose to work with this restricted version of definition in the major part of this paper, but shall also briefly discuss possible directions to expand the model and semantics in the following Subsection~\ref{sub.sem} so that we can obtain richer representation and finer-grained resolution.

On the other hand symmetrically, it is also totally feasible to start from an argument Kripke model $\N=(\W,A,G)$ with a possible argument $W\in\W$ and end up with the ``generated generated'' argument Kripke model $\N_{gg}$, while a corresponding duality theorem similar to Theorem~\ref{the.dua} should then be expected under a suitable set of presumptions. Anyhow, this dual generating process from $\N$ to $\N_{gg}$ might seem less intuitive or practically useful, and we shall omit detailed analysis here since by duality, the mathematical technique is just very similar.

In sum, this kind of duality theorem formally justifies our two-way generation in Definitions \ref{def.k2a} and \ref{def.a2k}, so that the \textit{static}, \textit{intrinsic} epistemic Kripke models and the \textit{dynamic}, \textit{extrinsic} argument Kripke models are indeed a dual pair.

\section{Further Discussion}\label{sec.dis}

This section collects a bunch of miscellaneous discussion on further expansions and applications of our framework presented in this paper.

\subsection{Syntax Expanded}\label{sub.syn}

In Definition~\ref{def.lan}, a symmetric pair of languages, namely the epistemic formula $\phi$ and the argument formula $\xi$, are defined as the ordinary basic modal logic. While epistemic logic has been well established in the literature, readers may keep wondering about the practical usefulness of the argument formulae, in spite of our explanation for their intuitive readings right after Definition~\ref{def.lan}. As a matter of fact, argument formulae are just as expressive as epistemic formulae, both mathematically and pragmatically, that is to say, they are indeed able to talk about lots of interesting features of argumentation. As an elementary example, the following very simple argument formula $\A_p\neg i$ intuitively says that, all arguments that attack the current argument with respect to proposition $p$ are not accessible to agent $i$, in other words, the current argument ``defeats'', ``dominates'' or ``overrules'' agent $i$ with respect to proposition $p$ because he has no available counter arguments to refute it, hence, whoever utters this argument out will certainly win the debate over proposition $p$ against agent $i$. Although we admit that the usage itself might seem a little bit awkward --- probably because it somewhat differs from our familiar SVO word order in English --- substantially, argument formulae are competently powerful for expressing a variety of situations in argumentation.

Moreover, we mainly restrict our attention in this paper onto basic modal logic for the reason of simplicity and clarity, but of course we by no means have to take such limit rigidly. Since sole argument formulae are definitely not enough for distinguishing every detail of argument Kripke models, we are rather free to expand the language, for example by introducing new modalities, so as to obtain more abundant expressivity just in any case of need.

\subsection{Semantics Expanded}\label{sub.sem}

On the other hand, it is also plausible to consider expanding the Kripke models, especially the argument one, with more affluent structure and information. As Definition~\ref{def.akm} shows, for simplicity at present, we only allow arguments to attack each other with respect to certain atomic proposition. However let alone in mathematics, even in real life, people often argue for different opinions which do not disagree simply on one atomic proposition. For example, two agents do not know whether Alice loves Bob or Bob loves Alice, hence there is no divergence on either atomic proposition, but one agent believes that Alice loves Bob if and only if Bob loves Alice, while the other agent does not think it is necessarily so. Furthermore, arguments can also disagree over agents' knowledge, such as $\K_jp$. Through taking various kinds of attack relations into consideration we will then be able to drop the restrictions on language in Theorem~\ref{the.dua}, as we have remarked right after that theorem's proof.

For another common concern on the relationship between our dual framework presented in this paper and Dung's argumentation theory \cite{Dung95}, as we have foretold in Section~\ref{sec.k2a}, Dung's arguments seem to attack each other only in the weak sense, because accordingly in his definition of consistency, any two arguments in a consistent set can never attack one another, which seems to be even stricter than our later definition in Section~\ref{sec.a2k} of maximal consistent sets of arguments as ultrafilters over a preorder. Anyway, although Dung's definition is perhaps not strong enough to fulfill our demand here as a dual theory, we might still consider using it in certain due case.

On the flip side, not only attack relations, but support relations between arguments may also get dealt with as another sort of modality. In total, the same technique applies to all these semantic expansions with various kinds of relations amongst the arguments.

\subsection{Application to Game Theory}

We would also like to suggest that our framework is versatile and flexible for applying to game theory, which contributes as another quite important practical usage. For a very rough instance, agents who are game players might take turns to pick one possible argument that is currently available to himself from the argument Kripke model, and as explained in the previous Subsection~\ref{sub.syn}, agent $i$ will lose this game of argumentation over proposition $p$ against any argument on which $\A_p\neg i$ holds. Briefly speaking, in accordance with different kinds of practical scenarios, our detailed game rules can also be set differently, and moreover, agents can devise their personal game strategies based on their own knowledge from the epistemic Kripke model.

\subsection{Dynamic Essence of Knowledge as Maxwell's Demon}

Commonly, knowledge is regarded as some \textit{static} state of the objective world, for example, we talk about the agents' knowledge at a single moment. The traditionally accepted definition of knowledge as justified true belief or something else alike does not seem to reveal any temporal characteristic of knowledge, either \cite{Audi98}. Nonetheless, our approach in this paper to treat knowledge and argumentation as a dual creatively provides an alternative angle of philosophical view, asserting that knowledge also bears an essential \textit{dynamic} dimension, as the current epistemic Kripke model is always accompanied by its corresponding generated argument Kripke model.

As for an intuitive explanation, here we would like to resort to Maxwell's demon. Based on both thermodynamics \cite{Reif65} and information theory \cite{Cover06}, the recognized explication for how the demon works is that the demon has to know each molecule's velocity in order to perform his task, while such knowledge (or to be called information) itself is negative entropy, in other words in the physical reality, entropy must increase when the demon attempts to acquire new knowledge about a molecule's velocity by whatever means. Hence the demon's seemingly \textit{subjective} knowledge as a \textit{static} mental state actually carries physically \textit{objective} negative entropy, exactly because the demon is able to act \textit{dynamically} according to his knowledge and thus influence the physical world. Therefore, this well known thought experiment strongly defenses our metaphysical and epistemological thesis that knowledge is essentially dynamic.

\section{Concluding Remarks}\label{sec.con}

In this paper, we come up with a dual theory between static knowledge and dynamic argumentation so as to uncover the two concepts' profound relevance. Particularly, we define in symmetry two types of modal formulae, for knowledge and argumentation respectively, as well as two corresponding types of Kripke models. We then \textit{semantically} establish the two-way generation method capable of transforming either type of Kripke model to the other type, and also rigidly justify our methodology by showing the \textit{syntactic} invariance via a dual theorem. Our novel yet convincing approach to combine knowledge and argumentation fundamentally gets rid of any unnatural affectations in other previous attempts, while our simple yet clear framework is supposed to lay out a basic playground for both epistemic logic and argumentation theory.

Moreover, through careful examination on a vivid example of pragmatically utilizing our dual theory, we not only demonstrate in a concrete way how the methodology of two-way generation actually works, but we are also led to the eventual discovery of our framework's one significantly useful application in practice: realistically, the argument Kripke model is usually much easier to observe and obtain, for example simply by recording the actual verbal debate among the agents, from the source data of which we are then able to generate the corresponding epistemic Kripke model and carry out a real-time analysis over the current debate. What is more important, the above task essentially relies little on human intuition and so is also viable to get mechanically performed by an AI system, who can properly implement modal logic's decidability as one of its main advantages.

Further, we discuss possible expansions as well as applications of our framework, which may constitute some future research directions. Especially, we claim that in contrary to our common perception, the seemingly ``\textit{static}'' knowledge essentially possesses an indispensably \textit{dynamic} instinct. As Francis Bacon's famous saying goes: ``\textit{Knowledge is power}''. Such admitted characterization of knowledge also displays a vibrant image to help us intuitively understand knowledge's dynamic essence, as power can only manifest itself through actual exertion of the power in some specific dynamic event, for instance, by demonstrating your argument power to skillfully defeat your opponent during a debate, while your such power exactly comes out of your knowledge. All the entanglement between statics and dynamics reminds us of Karl Marx's extremely influential motto: ``The philosophers have hitherto only \textit{interpreted} the world in various ways; the point, however, is to \textit{change} it''.

\section*{Acknowledgements}

Both the authors owe immense gratitude to Teeradaj Racharak (whose nickname is `X') for his teaching and help on argumentation theory. We would also like to thank the anonymous reviewers for their invaluable advice.

\clearpage

\bibliographystyle{splncs04}
\bibliography{main}

\end{document}